\newtheorem{theorem}{Theorem}
\newtheorem{lemma}{Lemma}
\newtheorem{conjecture}{Conjecture}
\let\epsilon=\varepsilon
\let\eps=\varepsilon
\let\rho = \varrho
\DeclareMathOperator{\tsp}{tsp}
\begin{document}

\title{Simple cubic graphs with no short\\ traveling salesman tour}

\author{
Robert Lukoťka \ and Ján Mazák
\\[3mm]
\\{\tt \{lukotka, mazak\}@dcs.fmph.uniba.sk}
\\[5mm]
Comenius University, Mlynská dolina, 842 48 Bratislava\\
}

\maketitle

\begin{abstract}
Let $\tsp(G)$ denote the length of a shortest travelling salesman tour in a graph $G$.
We prove that for any $\eps>0$, there exists
a simple $2$-connected planar cubic graph $G_1$ such that $\tsp(G_1)\ge (1.25-\eps)\cdot|V(G_1)|$, 
a simple $2$-connected bipartite cubic graph $G_2$ such that $\tsp(G_2)\ge (1.2-\eps)\cdot|V(G_2)|$,
and a simple $3$-connected cubic graph $G_3$ such that $\tsp(G_3)\ge (1.125-\eps)\cdot|V(G_3)|$.
\\
\emph{Keywords: traveling salesman problem; cubic graph}
\end{abstract}

\section{Traveling salesman problem in unweighted graphs}

The \emph{traveling salesman problem} (\emph{TSP}) is one of the most studied topics in both computer science and combinatorial optimization. 
Given a set of points and the distance between every pair of them, the task is to find the shortest Hamiltonian circuit.
\emph{Graphic TSP} is a special case of TSP: we are given an unweighted graph $G$ and want 
to find a shortest closed walk that contains all vertices of the graph.
Such walk is called a \emph{traveling salesman tour} or a \emph{TSP tour}. We will denote its length by $\tsp(G)$.

Recently, significant progress has been achieved in approximating graphic TSP. A series of incremental improvements culminated in a $1.4$-approximation algorithm of Seb\"o and Vygen \cite{SV}, providing the best ratio for general graphs known at present.
We refer the reader to the survey of Vygen \cite{V} for more information on approximation algorithms for TSP.

Together with general graphic TSP, several special instances of graphic TSP have been intensively studied. 
In particular, we are interested in \emph{cubic TSP} and \emph{subcubic TSP}, 
where the graph in question is cubic and subcubic, respectively.
(A \emph{cubic} graph is a graph with all vertices of degree three and a \emph{subcubic} graph is a graph with all vertices of degree at most three.) 
Note that every bridge (i.e., an edge whose removal increases the number of components of a graph) is used exactly twice in any TSP tour: it must be used at least twice and if it was used more, it would be possible to construct a shorter TSP tour.
After removing a bridge $b$ from a graph $G$ and finding TSP tours $T_1$ and $T_2$ of the resulting two components, 
we can create a TSP tour of $G$ by starting with $T_1$, then using $b$, continuing with $T_2$,
and closing the tour with $b$.
Thus to solve the problem of finding a TSP tour in cubic graphs with bridges, 
it suffices to solve the problem of finding a TSP tour in bridgeless subcubic graphs.
Since a graph must be connected to have a TSP tour, we are interested in $2$-connected subcubic graphs.

All present approximation algorithms for TSP on $2$-connected cubic and subcubic graphs only compare the length of the found tour with the number of vertices, that is, they basically do not care about the optimal TSP tour and just aim to find a TSP tour no longer than a constant multiple of the order of the given graph. We find the latter problem interesting in its own right.

M\"omke and Svensson \cite{MS} proved that every $2$-connected subcubic graph $G$ has a TSP tour of length at most 
$4/3 \cdot |V(G)|-2/3$. This bound is tight for both $2$-connected subcubic graphs and $2$-connected cubic multigraphs.
To improve the bounds further we have to examine a smaller class of graphs.
Simple $2$-connected cubic graphs seem to be the least restrictive class of graphs where improvement is possible.

Boyd et al. \cite{BSSS} proved that a simple $2$-connected cubic graph 
on $n$ vertices has a TSP tour of length at most $4/3 \cdot n -2$ (assuming $n \ge 6$). 
The result was subsequently improved by Correa, Larré, and Soto \cite{CLS} to $(4/3 - 1/61236) \cdot n$,
by van Zuylen \cite{Z} to $(4/3 - 1/8754) \cdot n$, by Candráková and Lukoťka \cite{CL} to
$1.3 \cdot n$, and very recently by Dvořák, Kráľ and Mohar \cite{DKM} to $9/7 \cdot n$.

For the class of subcubic graphs, graphs with no short TSP tours are easy to find. 
Indeed, it is sufficient to take two vertices and connect them by three equally long paths; one of the paths has to be used twice in any TSP tour.
This subcubic graph can be turned into a cubic graph by replacing vertices of degree $2$ by digons (cycles of length $2$); this replacement does not change the ratio of $4/3$ because whenever a TSP tour passes through a digon, it has to pass through all its vertices. This would fail, however, if we try to replace the vertices of degree $2$ with larger structures in order to avoid parallel edges.
Thus the situation is much less clear for the class of simple $2$-connected cubic graphs.
Boyd et al. \cite{BSSS} found an infinite family of $2$-connected cubic graphs
with no TSP tour shorter than $11/9 \cdot n - 8/9$.
We improve this lower bound by constructing, for any positive $\eps$, a $2$-connected cubic graph $G$ such that
$\tsp(G) > (1.25-\eps)|V(G)|$ (see Theorem~\ref{thmmain}).
We believe that the value $1.25$ is the best possible.

\begin{conjecture}\label{conj}
Every simple $2$-connected cubic graph with $n$ vertices has a traveling salesman tour of length at most $1.25n$.
\end{conjecture}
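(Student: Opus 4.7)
The natural starting point is the 2-factor framework used by Boyd et al.\ and its successors. Let $F$ be a 2-factor of $G$ (which exists by Petersen's theorem, since $G$ is 2-connected and cubic), let $M = E(G) \setminus F$ be the complementary perfect matching, and write $F = C_1 \cup \cdots \cup C_k$. A straightforward merging argument---start with $F$ and double one matching edge per merge along a spanning tree of the auxiliary ``cycle graph''---yields a closed Eulerian walk of length $n + 2(k-1)$. Thus, if every simple 2-connected cubic graph admits a 2-factor with at most $\lfloor n/8 \rfloor + 1$ cycles, the conjecture follows. Note that the Petersen graph has $n=10$ and a 2-factor consisting of two 5-cycles, giving exactly $n+2 = 12 = 1.2n$, so this elementary bound is already essentially tight on one of the natural extremal examples.

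The first step of my plan is therefore to attack the structural question: \emph{does every simple 2-connected cubic graph have a 2-factor with few components?} I would begin by analyzing 2-factors minimizing $k$, using the theory of $T$-joins and, if helpful, averaging over the abundance of perfect matchings guaranteed by the Kaiser--Kr\'a\v{l}--Norine theorem for bridgeless cubic graphs. When $k$ is large one expects many short cycles; the plan is to identify forbidden local configurations (for instance, a triangle in the 2-factor whose three matching partners all lie in one neighbouring cycle) that must be absent in a $k$-minimal choice, and use their absence to push $k$ down toward the target $n/8$.

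The merging step also needs refinement: when two cycles $C_i$ and $C_j$ of lengths at least $5$ are joined by a matching edge, one can usually \emph{shortcut} instead of duplicating, paying one matching edge while deleting one cycle edge---a net cost of $0$ instead of $+2$. I would capture this with a weighted cost function $\sum_i f(\ell_i)$ that assigns larger penalties to short cycles, and aim to show that the total weighted cost of any $k$-minimal 2-factor is at most $n/4$. Cycles of length $\geq 5$ should then contribute essentially zero, length-$4$ cycles only a small amount, and triangles---the problematic case---should be handled by a local 2-factor exchange, since a triangle in $F$ together with its three matching neighbours forms a rigid subgraph admitting alternative parity choices.

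The hardest part, which I expect to be the main obstacle, is 2-factors whose short cycles cluster in patterns resisting both shortcutting and local exchange. The $(1.25-\eps)n$ construction promised in Theorem~\ref{thmmain} must dictate the limits of any such proof: I would study its local gadgets carefully and calibrate the shortcut/exchange scheme so that precisely those gadgets saturate $5/4$ rather than overshooting. If the argument succeeds, it should reduce the worst case to a finite family of local configurations, each of which admits an explicit TSP tour meeting the $5/4$ bound and can be verified by hand or by computer search.
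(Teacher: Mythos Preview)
The statement you are attempting to prove is a \emph{conjecture}, not a theorem: the paper does not contain a proof of it, nor does it claim one. On the contrary, the paper explicitly presents Conjecture~\ref{conj} as open, notes that the best known upper bound is $9/7\cdot n$ due to Dvořák, Kráľ and Mohar, and remarks only that the cyclically $4$-connected case would follow from Fleischner's dominating cycle conjecture (itself open). There is thus no ``paper's own proof'' against which to compare your proposal.

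Your submission is, accordingly, not a proof but a research plan with explicitly acknowledged gaps. The central structural claim you would need---that every simple $2$-connected cubic graph has a $2$-factor with at most $\lfloor n/8\rfloor+1$ components---is precisely the open problem in disguise: in the paper's excess notation a $2$-factor has $v=0$, so the tour length is $n-2+2c$, and asking for $c\le n/8+1$ is exactly asking for a tour of length at most $1.25n$. The constructions of Theorem~\ref{thmmain} show that this bound on $c$, if true, is asymptotically tight, so no slack is available. Your proposed tools (local exchanges on triangles, shortcutting between long cycles, averaging over perfect matchings) are reasonable heuristics, but none of them is carried to a conclusion, and the ``hardest part'' you identify---clusters of short cycles resisting both shortcutting and exchange---is exactly where all prior approaches have stalled between $9/7$ and $5/4$. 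As written, the proposal does not close this gap and therefore does not constitute a proof.
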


A similar construction and conjecture were recently independently proposed by Dvořák, Kráľ and Mohar \cite{DKM}.
For cyclically $4$-connected cubic graphs, Conjecture~\ref{conj} is implied by the dominating cycle conjecture of Fleischner \cite{fleischner}:
a dominating cycle $C$ in a cubic graph $G$ must contain at least $3/4$ of the vertices of $G$ and every other vertex has a neighbour on $C$, thus we can construct a walk with length at most $5/4\cdot |V(G)|$ visiting all the vertices.

TSP was also studied for cubic bipartite graphs.
The best present result (proved by van Zuylen in \cite{Z}) guarantees that a connected bipartite cubic graph different from $K_{3,3}$
has a TSP tour of length at most $1.25 n -2$.
We employ our method from Theorem~\ref{thmmain} to show that for any positive $\eps$, there exists a $2$-connected bipartite cubic graph $G$
such that $\tsp(G) > (1.2 - \eps)\cdot|V(G)|$ (see Theorem~\ref{thmmainbipartite}). 

Considering the rapid recent development in research on TSP for simple $2$-connected cubic graphs,
it is quite possible that Conjecture~\ref{conj} will be soon proved. 
After that, $3$-edge-connected graphs constitute a natural subclass of cubic graphs to study if we want to improve the bound on TSP length below $1.25 n$. In Theorem~\ref{thm3} we prove the existence of a family of $3$-edge-connected graphs that have TSP tours of lengths asymptotically approaching $1.125n$. 



\section{Construction of graphs with no short tours}

Consider a TSP tour $T$ in a simple cubic graph $G$. 
No edge in $T$ is used three or more times, otherwise $T$ could be shortened. Thus the edges in $T$ are used either once or twice. 
The set of all the edges of $G$ that are used exactly once in $T$ induces a subgraph $H$ which is even (that is, all its vertices have even degree).
Actually, since $G$ is a cubic graph, all vertices of $H$ have degree $2$, and so $H$ is a collection of disjoint circuits.
The extension of $H$ by all the isolated vertices of $G-H$ yields an \emph{even factor} $F$ of $G$, that is, a subgraph which contains all vertices of $G$ and has even degrees of all vertices. Since $G$ is cubic, we can view its even factors as collections of disjoint circuits and isolated vertices.

Assume that the even factor $F$ corresponding to the tour $T$ contains $c$ circuits and $v$ isolated vertices.
We can easily express the length $|T|$ of $T$ in terms of $c$, $v$, and the order of $G$, denoted by $n$.
First, the edges of $T$ in $G/F$ (we contract all the circuits of $F$) constitute a spanning tree of $G/F$, otherwise $T$ could be shortened.
Each edge of this spanning tree is used twice in $T$, thus they contribute $2(c+v-1)$ to $|T|$.
The edges in circuits of $F$ contribute $n-v$. Altogether, $|T| = n-2 + 2c + v$. Our aim is to find graphs where the \emph{excess} $2c+v$ is large for every even factor.
For future reference, we denote by $q(G, F)$ the excess of an even factor $F$ in a graph $G$.

Our construction is based on suitable construction blocks.
Consider a gadget obtained from a $2$-edge-connected cubic graph $G$ by cutting an edge $e$ into a pair of dangling edges. Such structures are called cubic \emph{$2$-poles} (see \cite{NS, 2pole}).
The structure obtained from a $2$-edge-connected cubic graph $G$ by removing a vertex $v$ and replacing the three edges incident to $v$ by dangling edges is a \emph{$3$-pole}.
The notion of excess can be naturally extended to $r$-poles (we confine ourselves to $r\in \{2,3\}$ in the rest of this article). Even factors of an $r$-pole $P$ obtained from a graph $G$ are simply even factors of $G$ with the exception that whenever the original even factor $F$ contains an edge $e$ of $G$ that is not present in $P$, we replace $e$ by corresponding dangling edges. In particular, if $P$ is a $2$-pole, we replace $e$ by adding both dangling edges of $P$; if $P$ is a $3$-pole, we replace $e$ by the dangling edge sharing an endvertex with $e$. In both cases, two dangling edges are added and one circuit of $F$ is replaced by a path starting and ending with a dangling edge. This path contributes nothing to $q(P, F)$.

Every $r$-pole $P$ can be assigned a triple $t(P) = (q_0(P), q_2(P), n(P))$, where $q_0(P)$ is the minimum of $q(P, F_0)$ over all even factors $F_0$ of $P$ containing no dangling edges, $q_2(P)$ is the minimum of $q(P, F_2)$ over all even factors $F_2$ of $P$ containing two dangling edges, and $n(P)$ is the number of vertices in $P$. (The number of dangling edges belonging to an even factor is always even.) For instance, the $2$-pole arising from $K_4$ is assigned the triple $(2, 0, 4)$.

Given a $2$-pole $A$, we construct a $2$-pole $A'$ as indicated in Fig.~\ref{suciastka}: one dangling edge of each copy of $A$ is attached to a new vertex $x_1$, the other is attached to a new vertex $x_2$; in addition, $A'$ contains a path $x_1y_1y_2x_2$, one dangling edge incident to $y_1$, and one dangling edge incident to $y_2$. The relationship between $t(A')$ and $t(A)$ is captured in the following lemma.

\begin{figure}[b]
\centering\includegraphics{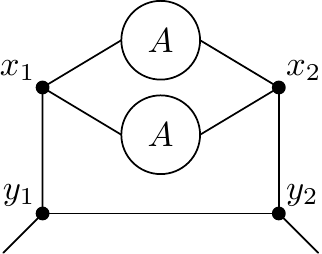}
\caption{The construction of $A'$ from $A$.}
\label{suciastka}
\end{figure}

\begin{lemma}
\label{lemmarecursive}
Let $A$ be a $2$-pole such that $t(A) = (a+2, a, n)$.
Then $$t(A') = (2a+4, 2a+2, 2n+4).$$
\end{lemma}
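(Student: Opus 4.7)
My plan is a case analysis on even factors $F$ of $A'$ according to the restriction $F_i = F|_{A_i}$ on each of the two copies $A_1, A_2$ of $A$. Since $A_i$ is a $2$-pole, the number of dangling edges of $A_i$ in $F_i$ is $\delta_i \in \{0, 2\}$. Up to swapping the two copies there are three outer cases $(\delta_1, \delta_2) \in \{(0,0), (2,0), (2,2)\}$. In each case the parity constraint at $x_1$ forces $x_1 y_1 \in F$ iff $\delta_1 + \delta_2 = 2$, and symmetrically at $x_2$; the parity constraints at $y_1, y_2$ then couple the edge $y_1 y_2$ with the two $A'$-dangling edges, giving two sub-choices per outer case. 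Exactly six subcases arise, and each either contributes to $q_0(A')$ (no $A'$-dangling used) or to $q_2(A')$ (both $A'$-danglings used).

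For each subcase I would compute $q(A', F)$ using $q(A_i, F_i) \ge a+2$ when $\delta_i = 0$ and $q(A_i, F_i) \ge a$ when $\delta_i = 2$, together with the contribution of the new vertices $x_1, x_2, y_1, y_2$. The delicate part of the bookkeeping is that, when $\delta_i = 2$, the internal path component of $F_i$ (which contributes $0$ to $q(A_i, F_i)$) is extended through the gadget and may either close into a circuit of $A'$ (adding $2$ to the excess) or become a path ending in both $A'$-dangling edges (adding $0$); in addition, each $x_j$ or $y_j$ that ends up isolated contributes $1$. Tabulating the six subcases yields $q_0(A') \ge 2a+4$, with the minimum attained in the subcase $(\delta_1,\delta_2) = (2,0)$ with $y_1 y_2 \in F$, and $q_2(A') \ge 2a+2$, attained both in $(\delta_1,\delta_2) = (2,0)$ with $y_1 y_2 \notin F$ and in $(\delta_1,\delta_2) = (2,2)$ with $y_1 y_2 \in F$.

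For the matching upper bounds it suffices to pick an even factor of $A_1$ achieving $q_2(A) = a$ and one of $A_2$ achieving $q_0(A) = a+2$, and combine them with the two gadget completions just mentioned; this exhibits even factors of $A'$ with excess exactly $2a+4$ (no $A'$-dangling) and exactly $2a+2$ (both $A'$-danglings). The vertex count $n(A') = 2n+4$ is immediate from the construction. I do not expect a genuine obstacle: the only risk is mis-classifying a component as a path versus a circuit when an $F_i$-path is extended through $\{x_1, x_2, y_1, y_2\}$, but a small diagram of each subcase, recording which of $x_1 y_1, y_1 y_2, x_2 y_2$ and the $A'$-danglings are in $F$, keeps this under control.
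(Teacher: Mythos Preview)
Your proposal is correct and follows essentially the same approach as the paper: a case analysis on how an even factor $F$ of $A'$ restricts to the two copies of $A$ and to the four new vertices, tracking whether the path component closes into a circuit or exits through the dangling edges. The paper organises the cases by first fixing whether the $A'$-dangling edges lie in $F$ (computing $q_2$ and $q_0$ separately) and then tracing the resulting path or circuit through $y_1,y_2,x_1,x_2$, whereas you organise them by $(\delta_1,\delta_2)$ first; the resulting six subcases and their excess contributions coincide.
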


\begin{proof}
Let $F$ be an even factor in $A'$. First, we determine $q_2(A')$. If $F$ contains a path $C$ containing the dangling edges, there are two cases: either $C$ passes through $y_1$, $x_1$, one copy of $A$, $x_2$, and $y_2$, or it passes just through $y_1$ and $y_2$. In the first case, the copy of $A$ with nonempty intersection with $C$ contributes at least $q_2(A)$ to $q_2(A',F)$, and the other copy contributes at least $q_0(A)$, so $q_2(A')$ is $2a+2$.
In the second case, either $x_1$ and $x_2$ are isolated vertices in $F$, so contribute $2$, and each copy of $A$ contributes at least $q_0(A)$ to $q_2(A',F)$, or there is a circuit passing through $x_1$, $x_2$ and both copies of $A$, which contributes at least $2q_2(A)+2$ to $q_2(A',F)$. If we take the minimum, we get $q_2(A')=2a+2$.

Next, we consider $F$ that does not contain the dangling edges of $A'$ and determine $q_0(A')$. There are two cases: either $y_1$ and $y_2$ belong to a circuit $C$ (which must include $x_1$, $x_2$, and a copy of $A$), or $y_1$ and $y_2$ are isolated vertices in $F$. In the first case, the contribution to $q_0(A',F)$ is $2$ from the circuit $C$ and at least $q_0(A)+q_2(A)$ from the copies of $A$. In the second case, we have contribution of $2$ from $y_1$ and $y_2$ and at least $2a+2$ from the rest of $A'$ (the situation was analysed in the previous paragraph). Altogether, $q_0(A')=2a+4$.

Finally, $|V(A')|$ is obviously $2n+4$.
\end{proof}

We are ready to state and prove the main theorem.

\let\eps\varepsilon

\begin{theorem}
For any $\eps>0$, there exists a simple planar cubic graph $G$ with no traveling salesman tour of length shorter than $(1.25-\eps)|V(G)|$.
\label{thmmain}
\end{theorem}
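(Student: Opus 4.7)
The plan is to iterate Lemma~\ref{lemmarecursive} from a small base $2$-pole and then close the resulting $2$-pole into a cubic graph. As noted in the paper, the $2$-pole $A_0$ obtained from $K_4$ by cutting an edge has $t(A_0) = (2, 0, 4)$, which is of the form $(a+2, a, n)$ with $a=0$. Setting $A_{k+1} = (A_k)'$, induction on $k$ using Lemma~\ref{lemmarecursive} gives $t(A_k) = (a_k + 2, a_k, n_k)$ with $a_{k+1} = 2 a_k + 2$ and $n_{k+1} = 2 n_k + 4$. Solving the recurrences yields $a_k = 2^{k+1} - 2$ and $n_k = 2^{k+3} - 4$, so $a_k / n_k \to 1/4$.

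Next I would obtain a cubic graph $G_k$ by taking two disjoint copies of $A_k$ and gluing them along their dangling edges (pairing the two dangling edges of one copy with those of the other). By the parity constraint, any even factor $F$ of $G_k$ uses either $0$ or $2$ of the gluing edges, and in both cases the restrictions to the two copies are even factors of the same dangling-edge type. If both copies use no dangling edges, the excess of $F$ is at least $2 q_0(A_k) = 2 a_k + 4$. If both copies use two dangling edges, the two open paths join into one circuit of $G_k$, contributing an extra $2$ to the excess, which is therefore at least $2 q_2(A_k) + 2 = 2 a_k + 2$. Using the identity $|T| = n - 2 + q(G_k, F)$ derived in the paper, I obtain
$$\tsp(G_k) \ge 2 n_k - 2 + (2 a_k + 2) = 2 n_k + 2 a_k,$$
so $\tsp(G_k)/|V(G_k)| \ge 1 + a_k/n_k$, which exceeds $1.25 - \eps$ once $k$ is large enough.

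The remaining task is to verify that $G_k$ is simple, planar, and $2$-connected. I would proceed by induction, carrying the invariant that $A_k$ is a simple $2$-edge-connected planar $2$-pole drawable with both dangling edges on the outer face and with non-adjacent dangling-edge endpoints. The gadget $A'$ of Fig.~\ref{suciastka} preserves these properties, since the new dangling edges are attached to the fresh vertices $y_1$ and $y_2$, which lie on the outer face and are separated from the copies of $A$ by the auxiliary path. Consequently, pairing up the dangling edges of two copies of $A_k$ introduces no parallel edges, and the resulting graph is $2$-connected and planar.

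The step I expect to be the main obstacle is the case analysis for the excess: I need to be sure that every even factor of $G_k$ falls into one of the two listed types and that no subtler configuration yields a smaller excess than $2 a_k + 2$. The iteration itself and the simplicity/planarity bookkeeping are essentially mechanical once the inductive invariant has been fixed.
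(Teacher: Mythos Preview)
Your proof is correct and follows essentially the same approach as the paper: both iterate Lemma~\ref{lemmarecursive} starting from the $K_4$-based $2$-pole $A_0$ and solve the same recurrences for $a_k$ and $n_k$. The only difference is cosmetic, in the final closing step---the paper inserts $A_k$ into an edge of a copy of $K_4$ (yielding $n_k+4$ vertices and minimum excess $a_k+2$), whereas you glue two copies of $A_k$ along their dangling edges (yielding $2n_k$ vertices and minimum excess $2a_k+2$)---and both variants give the same limiting ratio $1+a_k/n_k\to 1.25$.
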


\begin{proof}
We construct a sequence of $2$-poles $\{A_k\}_{k=0}^\infty$ with increasing ratio of excess 
to the number of vertices.

Let $A_0$ be the $2$-pole that arises from $K_4$ by cutting an edge. We have $t(A_0) = (2, 0, 4)$. 
For every integer $k\ge 1$, let $A_k = A_{k-1}'$. Note that if $A$ is planar, then $A'$ is planar, thus all the constructed $2$-poles are planar (since $A_0$ is planar).
According to Lemma~\ref{lemmarecursive},
the sequence of triples $\{t(A_k)\} = \{(a_k+2, a_k, n_k)\}$ satisfies 
\begin{eqnarray*}
a_k &=& 2a_{k-1} + 2,\quad a_0 = 0;\\
n_k &=& 2n_{k-1} + 4,\quad n_0 = 4.
\end{eqnarray*}
Solving the last two recurrence relations using the standard machinery of generating functions (see \cite{wilf}) yields $a_k=2\cdot 2^k-2$ and $n_k=8\cdot 2^k-4$ for every $k\ge 0$.

By inserting $A_k$ into an edge $e$ of a planar cubic graph $H$ isomorphic to $K_4$, we obtain a simple planar cubic graph $G_k$ with the ratio of the length of a traveling salesman tour 
to the number of vertices arbitrarily close to $5/4$.
Indeed, consider an even factor $F$ of $G_k$. If $F$ contains the edges arising from $e$, its excess is at least $q_2(A_k)+2$ (we add $2$ for the circuit passing through the vertices of $H$).
If $F$ does not contain those edges, it is composed of an even factor of $A_k$ with excess at least $q_0(A_k)$ and an even factor of $H$ with excess at least $2$.
Altogether, the length of a TSP tour $T$ is at least $n_k-2 + (a_k+2) = n_k+a_k$,
and thus the ratio $|T|/|V(G_k)|$ is equal to $(n_k+a_k)/(n_k+4)$, which belongs to $(1.25-\eps, 1.25)$ for a sufficiently large $k$.
\end{proof}

\begin{theorem}
For any $\eps>0$, there exists a simple bipartite cubic graph $G$ with no traveling salesman tour of length shorter than $(1.2-\eps)|V(G)|$.
\label{thmmainbipartite}
\end{theorem}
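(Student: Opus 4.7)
The plan is to mimic the proof of Theorem~\ref{thmmain} essentially verbatim, but seed the recursion with a $2$-pole arising from $K_{3,3}$ (instead of $K_4$) and carry bipartiteness through the whole construction. Since the recurrences supplied by Lemma~\ref{lemmarecursive} are $a_k = 2a_{k-1}+2$ and $n_k = 2n_{k-1}+4$, only the starting value of $n_k$ changes: raising it from $4$ to $6$ pushes the limit of $a_k/n_k$ from $1/4$ to $1/5$, which is exactly what is needed for a TSP ratio of $6/5$.

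First I would let $B_0$ be the $2$-pole obtained from $K_{3,3}$ by cutting an edge $uv$. Inspecting Hamilton cycles of $K_{3,3}$ that use $uv$, respectively avoid $uv$, gives $t(B_0)=(2,0,6)$, so $B_0$ is of the form $(a+2,a,n)$ required by Lemma~\ref{lemmarecursive}. Moreover, because the endpoints of any edge of $K_{3,3}$ sit in opposite color classes, the two dangling edges of $B_0$ are incident to vertices in opposite classes.

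Next I would set $B_k = B_{k-1}'$ and prove by induction that each $B_k$ is bipartite with its two dangling edges in opposite color classes. Assuming this for $B_{k-1}$, in the construction of Figure~\ref{suciastka} I would glue the two copies of $B_{k-1}$ so that the two dangling edges landing at $x_1$ come from vertices of one class and the two landing at $x_2$ from the other; this forces $x_1$ and $x_2$ into opposite classes, the path $x_1 y_1 y_2 x_2$ admits the unique proper $2$-colouring extending this choice, and the new dangling edges at $y_1$ and $y_2$ automatically fall into opposite classes. Lemma~\ref{lemmarecursive} then yields $b_k = 2\cdot 2^k - 2$ and $n_k = 10\cdot 2^k - 4$, hence $b_k/n_k \to 1/5$.

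Finally I would form $G_k$ by inserting $B_k$ into a separate copy of $K_{3,3}$ along one of its edges, matching the color classes of $B_k$'s dangling edges with those of the endpoints of the removed edge; the result is a simple $2$-connected bipartite cubic graph. The excess analysis from Theorem~\ref{thmmain} carries over unchanged: every even factor has excess at least $b_k+2$, so $\tsp(G_k) \ge n_k + b_k + 6$ and $\tsp(G_k)/|V(G_k)| \to 6/5$. The only genuinely new ingredient over Theorem~\ref{thmmain} is the bipartite bookkeeping --- verifying that the two dangling edges remain in opposite color classes throughout the recursion and that the final gluing into $K_{3,3}$ respects the bipartition --- which is the step I expect to be the main (though mild) obstacle; everything else is a direct transcription of the planar argument with a different initial triple.
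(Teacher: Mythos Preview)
Your proposal is correct and follows essentially the same approach as the paper: seed the recursion with the $2$-pole from $K_{3,3}$, apply Lemma~\ref{lemmarecursive} to obtain $a_k=2\cdot 2^k-2$ and $n_k=10\cdot 2^k-4$, and close up inside $K_{3,3}$. The only cosmetic difference is that the paper packages the bipartite bookkeeping into a single invariant (``truly bipartite'': no odd circuit and every dangling-to-dangling path has an even number of vertices), whereas you track an explicit $2$-colouring with the dangling ends in opposite classes --- these are equivalent conditions.
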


\begin{proof}
We construct a sequence of $2$-poles $\{A_k\}_{k=0}^\infty$ with increasing ratio of excess 
to the number of vertices. All the constructed 2-poles are {\it truly bipartite} in the following sense: they do not contain any odd circuit and every path both starting and ending with a dangling edge contains an even number of vertices.

Let $A_0$ be the truly bipartite $2$-pole that arises from $K_{3,3}$ by cutting an edge. We have $t(A_0) = (2, 0, 6)$. 
For every integer $k\ge 1$, let $A_k = A_{k-1}'$. Note that if $A$ is truly bipartite, then $A'$ is truly bipartite, thus all the constructed $2$-poles are truly bipartite.
According to Lemma~\ref{lemmarecursive},
the sequence of triples $\{t(A_k)\} = \{(a_k+2, a_k, n_k)\}$ satisfies 
\begin{eqnarray*}
a_k &=& 2a_{k-1} + 2,\quad a_0 = 0;\\
n_k &=& 2n_{k-1} + 4,\quad n_0 = 6,
\end{eqnarray*}
thus $a_k=2\cdot 2^k-2$ and $n_k=10\cdot 2^k-4$ for every $k\ge 0$.

By inserting $A_k$ into an edge of the bipartite cubic graph $K_{3,3}$, we obtain a simple bipartite cubic graph $G_k$ with the ratio of the length of a TSP tour  to the number of vertices being $(n_k+a_k)/(n_k+6)$, which approaches $1.2$ for large values of $k$.
\end{proof}

Next, we describe an analogous construction which yields $3$-connected simple cubic graphs with no short tours.
A $3$-pole $B$ is \emph{symmetric} if for each permutation $\pi$ of the three dangling edges of $B$ there exists 
an automorphism of $B$ that permutes the dangling edges according to $\pi$.
Given a symmetric $3$-pole $B$, we construct a symmetric $3$-pole $B''$ as indicated in Fig.~\ref{suciastka3}: 
we take the Petersen graph, remove a vertex, and replace each of the remaining vertices by a copy of $B$. 
Note that since $B$ is symmetric, the resulting $B''$ is uniquely determined and also symmetric. 
The relationship between $t(B'')$ and $t(B)$ is captured in the following lemma.

\begin{figure}[b]
\centering\includegraphics{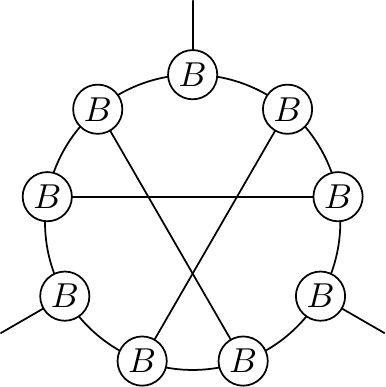}
\caption{The construction of $B''$ from $B$.}
\label{suciastka3}
\end{figure}

\begin{lemma}
\label{lemmarecursive3connected}
Let $B$ be a symmetric $3$-pole such that $t(B) = (b+1, b, n)$.
Then $$t(B'') = (9b+2, 9b+1, 9n).$$
\end{lemma}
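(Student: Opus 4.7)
My plan is to imitate the argument of Lemma~\ref{lemmarecursive}: decompose any even factor $F$ of $B''$ according to its behaviour on each of the nine copies of $B$, viewing the inter-copy structure as an even factor of the $3$-pole $P^\ast$ obtained by deleting one vertex $v_0$ from the Petersen graph. At each copy of $B$, the restriction of $F$ is itself an even factor of that copy, and so it uses either $0$ dangling edges (call this copy \emph{inactive}, contributing at least $q_0(B)=b+1$ to the excess) or $2$ dangling edges (\emph{active}, contributing at least $q_2(B)=b$). The macro and external dangling edges of $B''$ that lie in $F$ then form an even factor $M$ of $P^\ast$ in which each inactive copy becomes an isolated macro vertex and each active copy a macro vertex of degree~$2$.

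Next I would establish the formula
\[
q(B'',F)\;\ge\;(9-v_M)\cdot b + v_M\cdot(b+1) + 2c_M \;=\; 9b + v_M + 2c_M,
\]
where $v_M$ is the number of inactive copies and $c_M$ is the number of macro circuits of $M$, and observe that the bound is tight for suitable internal choices: each macro circuit of $M$ becomes exactly one circuit of $F$ (contributing $2$ to the excess), while the at most one macro path running between the external dangling edges of $B''$ contributes $0$. The symmetry of $B$ guarantees that an internal even factor achieving $q_2(B)$ exists for any prescribed pair of dangling edges, so every macro configuration $M$ can indeed be realized.

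The remaining task is to minimize $v_M+2c_M$ over the admissible macro configurations. For $q_0(B'')$, no external dangling edge of $B''$ is used, so every macro vertex is either isolated or lies on a macro cycle; since the Petersen graph has girth $5$ and $P^\ast$ has only $9$ vertices, at most one macro cycle can appear. The hypohamiltonicity of the Petersen graph furnishes a Hamilton cycle of $P^\ast$, yielding $v_M+2c_M=2$, and no smaller value is possible. For $q_2(B'')$, $M$ contains a macro path between two of the three neighbours of $v_0$; the value $v_M+2c_M=0$ would require a Hamilton path of $P^\ast$ between such a pair, which together with $v_0$ would complete a Hamilton cycle of the Petersen graph~-- impossible. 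Hence $v_M+2c_M\ge 1$, and equality is witnessed by an explicit $8$-vertex path between two neighbours of $v_0$ with the remaining vertex left isolated.

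The step I expect to require the most care is the bookkeeping in the first two paragraphs: verifying that the macro circuits of $M$ are in bijection with the expected circuits of $F$, and that the symmetry hypothesis on $B$ truly permits routing an optimal internal even factor through any prescribed pair of dangling edges. Once these details are settled, the minimisation reduces to the classical fact that the Petersen graph is hypohamiltonian (non-Hamiltonian, yet becoming Hamiltonian after the removal of any single vertex), and the vertex count $n(B'')=9n$ is immediate.
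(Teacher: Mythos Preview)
Your proposal is correct and follows essentially the same approach as the paper: both arguments bound the excess by tallying per-copy contributions ($b$ for active, $b+1$ for inactive copies) plus the macro-level cycle count, and both rely on the non-Hamiltonicity and hypohamiltonicity of the Petersen graph together with the symmetry of $B$ to pin down the minima. Your formulation via the macro even factor $M$ and the explicit quantity $v_M+2c_M$ is a somewhat tidier packaging of the same case analysis the paper carries out directly.
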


\begin{proof}
First, we determine $q_0(B'')$. If there is a circuit $C$ of an even factor $F$ passing through at least $2$ copies of $B$, then $C$ contributes $2$ and 
each copy of $B$ contributes at least $b$ to the value of
$q_0(B'',F)$. If there is no such circuit, then each copy of $B$ contributes at least $b+1$ to $q_0(B'',F)$. Consequently, $q_0(B'') \ge 9b+2$.
Since $B$ is symmetric, for every choice of two dangling edges of $B$, there is a $2$-factor $F$ containing the chosen dangling edges with $q_2(B,F) = q_2(B)$.
It is, therefore, possible to get $q_0(B'', F) = 9b+2$ for a $2$-factor $F$ of $B''$ containing a circuit passing through all the nine copies of $B$ in $B''$, hence $q_0(B'') = 9b+2$.

Next, we determine $q_2(B'')$. Any path $C$ containing two dangling edges can visit at most $8$ copies of $B$ (otherwise the Petersen graph would be Hamiltonian; equality is attainable since $B$ is symmetric).
There is thus at least one copy not visited by $C$, which contributes at least $b+1$, and each of the remaining copies contributes at least $b$ to $q_2(B'',F)$. Altogether, $q_2(B'')=9b+1$.

Finally, $B''$ has obviously $9n$ vertices.
\end{proof}

\begin{theorem}
For any $\eps>0$, there exists a $3$-connected simple cubic graph $G$ with no traveling salesman tour of length shorter than $(1.125-\eps)|V(G)|$.
\label{thm3}
\end{theorem}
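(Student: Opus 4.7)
The approach is the same as for Theorems~\ref{thmmain} and~\ref{thmmainbipartite}, but built on Lemma~\ref{lemmarecursive3connected} and the Petersen-based $''$-operation. I start with the trivial symmetric $3$-pole $B_0$ consisting of a single vertex with three dangling edges (arising from the cubic multigraph $\theta_3$ on two vertices joined by three parallel edges), whose triple is $t(B_0)=(1,0,1)$ and thus matches the hypothesis of Lemma~\ref{lemmarecursive3connected} with $b_0 = 0$, $n_0 = 1$. Defining $B_k := B_{k-1}''$ and observing that the $''$-operation preserves symmetry, the lemma gives the recurrences $b_k = 9 b_{k-1}+1$ and $n_k = 9 n_{k-1}$, whose solutions are $b_k = (9^k - 1)/8$ and $n_k = 9^k$.

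To obtain a cubic graph from $B_k$, I replace one vertex $v$ of $K_4$ by $B_k$, identifying the three dangling edges of $B_k$ with the three edges formerly incident to $v$. The resulting graph $G_k$ is a simple cubic graph on $9^k + 3$ vertices. Its $3$-vertex-connectivity should follow, by induction on $k$, from the $3$-connectivity of $K_4$ and of the Petersen graph: any potential $2$-vertex cut would have to either split an individual symmetric copy of $B_{k-1}$ (ruled out by induction) or sever the graph across the three-edge interface at such a copy (ruled out because the three dangling edges meet the copy at three distinct vertices inherited from the Petersen skeleton).

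The lower bound on $\tsp(G_k)$ comes from estimating $q(G_k, F)$ for every even factor $F$. Split $F$ into $F_B$ inside $B_k$ and $F_H$ inside the triangle $3$-pole $H := K_4 - v$; the number of boundary edges used is $0$ or $2$. If $0$, the two parts are edge-disjoint in $G_k$, so $q(G_k, F) = q(B_k, F_B) + q(H, F_H) \geq q_0(B_k) + 2 = b_k + 3$. If $2$, the unique path in $F_B$ joining the two active dangling edges and the corresponding path in $F_H$ glue (via the two boundary edges) into a single circuit of $F$, which contributes an extra $2$ beyond what the two paths contribute to $q(B_k, F_B)$ and $q(H, F_H)$; hence $q(G_k, F) = q(B_k, F_B) + q(H, F_H) + 2 \geq q_2(B_k) + 0 + 2 = b_k + 2$. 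Either way the excess is at least $b_k + 2$, so
\[
\tsp(G_k) \geq |V(G_k)| - 2 + (b_k + 2) = 9^k + 3 + \frac{9^k - 1}{8},
\]
and dividing by $|V(G_k)| = 9^k + 3$ yields a ratio tending to $9/8 = 1.125$ as $k \to \infty$, which gives the claimed bound for all sufficiently large $k$.

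The step I expect to require the most care is the verification of genuine $3$-vertex-connectivity, which does not follow automatically from the $3$-edge-connectivity that the construction yields. The natural inductive argument sketched above should work, but the structural details about how the three dangling edges of each nested copy are positioned, and why no two vertices can separate $G_k$, deserve a careful write-up. The excess computation itself, including the ``$+2$'' correction in the second case, is routine once the $t$-triple framework is in place.
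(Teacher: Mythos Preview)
Your argument is correct and essentially mirrors the paper's proof: the same seed $B_0$, the same recursion via Lemma~\ref{lemmarecursive3connected}, and the same excess analysis. The only cosmetic difference is that the paper closes $B_k$ by attaching a \emph{single} new vertex $v$ to the three dangling edges (so $|V(G_k)|=9^k+1$ rather than $9^k+3$); the case split is then ``$v$ lies on a circuit'' versus ``$v$ is isolated'', yielding the same minimum excess $b_k+2$.

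Your caveat about $3$-vertex- versus $3$-edge-connectivity is unnecessary: in a simple cubic graph the two notions coincide. Indeed, if $\{u,v\}$ were a $2$-vertex cut with components $A$ and $B$, then the six edges at $u,v$ split as $a+b$ into $A$ and $(3-a)+(3-b)$ into $B$ (there is no $uv$-edge by $3$-edge-connectivity), and one checks that either $[A,\overline A]$ or $[A\cup\{u\},\overline{A\cup\{u\}}]$ has size at most~$2$. So once you have $3$-edge-connectivity (which follows from the Petersen skeleton by an easy induction, and which the paper leaves implicit), you are done.
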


\begin{proof}
We construct a sequence of symmetric $3$-poles $\{B_k\}_{k=0}^\infty$ with increasing ratio of excess to number of vertices.

Let $B_0$ be a vertex incident to three dangling edges; we have $t(B_0) = (1, 0, 1)$. 
For every integer $k\ge 1$, let $B_k = B_{k-1}''$. According to Lemma~\ref{lemmarecursive3connected},
the sequence of triples $\{t(B_k)\} = \{(b_k+1, b_k, n_k)\}$ satisfies 
\begin{eqnarray*}
b_k &=& 9b_{k-1} + 1,\quad b_0 = 0;\\
n_k &=& 9n_{k-1},\quad n_0 = 1.
\end{eqnarray*}
Solving the last two recurrence relations by standard methods yields $b_k=(9^k-1)/8$ and $n_k=9^k$ for every $k\ge 0$.

By attaching a new vertex $v$ to the three dangling edges of $B_k$, we obtain a cubic graph $G_k$ with the ratio of the length of a TSP tour to the number of vertices arbitrarily close to $1.125$.
Indeed, consider an even factor $F$ of $G_k$. If $F$ contains two of the edges incident to $v$, its excess is at least $q_2(B_k)+2$ (we add $2$ for the circuit passing through $v$).
If $F$ contains $v$ as an isolated vertex, its excess is $q_0(B_k)+1$. Put together, $q(G_k, F) = \min\{b_k+2, (b_k+1)+1\} = b_k+2$.
Consequently, the length of a TSP tour $T$ is at least $n_k-2 + (b_k+2) = n_k+b_k$,
and thus the ratio $|T|/|V(G_K)|$ is equal to $(n_k+b_k)/(n_k+1)$, which belongs to $(1.125-\eps, 1.125)$ for a sufficiently large $k$.

\end{proof}

\noindent\textbf{Acknowledgements.}
We are thankful to the anonymous referees for useful suggestions for improving this paper, especially for pointing out that our construction principle is also applicable to bipartite graphs. We acknowledge support from the research grants VEGA 1/0474/15, VEGA 1/0876/16, and APVV-15-0220.


\begin{thebibliography}{99}

\bibitem{BSSS}
S.~Boyd, R.~Sitters, S.~van~der Ster, L.~Stougie,
The traveling salesman problem on cubic and subcubic graphs,
Mathematical Programming 144 (2014), 227--245.

\bibitem{CL}
B. Candráková, R. Lukoťka, Cubic TSP---a 1.3-approximation, arXiv:1506.06369 [cs.DM] (2015).

\bibitem{CLS}
J.~R. Correa, O.~Larré, J.~A. Soto,
TSP tours in cubic graphs: Beyond $4/3$,
SIAM J. Discrete Math 29 (2015), 915--939.

\bibitem{DKM}
Z. Dvořák, D. Kráľ, B. Mohar, Graphic TSP in cubic graphs,
Proceedings of 34th International Symposium on Theoretical Aspects of Computer Science (STACS'17),
Leibniz International Proceedings in Informatics (LIPIcs) vol. 66, article no. 27, Schloss Dagstuhl - Leibniz-Zentrum fuer Informatik, 2017.



\bibitem{2pole}
M. A. Fiol, J. Vilaltella, Some results on the structure of
multipoles in the study of snarks, Electron. J. Combin. 22 (1) (2015), \#P1.45. 

\bibitem{fleischner}
H. Fleischner, Some blood, sweat, but no tears in Eulerian graph theory, Congr. Numer. 63 (1988), 8--48.

\bibitem{MS}
T. M\" omke, O. Svensson, Approximating graphic tsp by matchings. 
In Proceedings of the 2011 IEEE 52nd Annual Symposium on Foundations of Computer Science, 
FOCS ’11, 560--569, Washington, DC, USA, 2011. IEEE Computer Society.

\bibitem{NS}
R. Nedela, M. Škoviera, Decompositions and reductions of snarks, J. Graph Theory 22 (3) (1996), 253--279.

\bibitem{SV}
A. Seb\" o, J. Vygen,
Shorter tours by nicer ears: 7/5-approximation for the graph-TSP, 3/2 for the path version, and 4/3 for two-edge-connected subgraphs, Combinatorica (2014), 1--34.

\bibitem{Z}
A. van Zuylen, Improved Approximations for Cubic Bipartite and Cubic TSP,
in: Q. Louveaux, M. Skutella (eds) Integer Programming and Combinatorial Optimization, IPCO 2016, Lecture Notes in Computer Science, vol 9682, Springer, Cham, 250--261.

\bibitem{V}
J. Vygen, New approximation algorithms for the TSP, OPTIMA 90 (2012), 1--12.

\bibitem{wilf}
H. S. Wilf, Generatingfunctionology, Third Edition, A K Peters / CRC Press (2005), ISBN 9781439864395.


\end{thebibliography}
\end{document}